\documentclass[aps,pre,twocolumn,showpacs]{revtex4-1}
\usepackage{amssymb}
\usepackage{graphicx}
\usepackage{dcolumn}
\usepackage{bm}
\usepackage{amsmath}
\usepackage{soul,color}
\usepackage{textcomp}
\usepackage{times}
\usepackage{enumitem}

\definecolor{cGreen}{RGB}{0,153,76}
\definecolor{cBlue}{RGB}{45,51,180}
\definecolor{cmagenta}{RGB}{205,0,100}

\usepackage[colorlinks,linkcolor=cBlue,citecolor=cBlue]{hyperref}

\newtheorem{prop}{Proposition}

\newenvironment{proof}{{\noindent\it Proof}\quad}{\hfill $\square$ \par}

\setcounter{MaxMatrixCols}{10}

\begin{document}

\preprint{APS/123-QED}

\title{Complex-Valued Neural Operator Assisted Soliton Identification}
\author{Ming Zhang$^{1}$}
\author{Qi Meng$^{2}$}
\email{meq@microsoft.com}
\author{Deng Zhang$^{3}$}
\author{Yue Wang$^{2}$}
\author{Guanghui Wang$^{1}$}
\author{Zhiming Ma$^{4}$}
\author{Li Chen$^{5}$}
\email{lchen@sxu.edu.cn}
\author{Tie-Yan Liu$^{2}$}
\email{tyliu@microsoft.com}
\affiliation{$^1$School of Mathematics, Shandong University, Jinan, 250100, China.\\
$^2$Microsoft Research, Beijing, 100080, China.\\
$^3$School of Mathematical Sciences, Shanghai Jiao Tong University, Shanghai, 200240, China.\\
$^4$Academy of Mathematics and System Sciences, Chinese Academy of Sciences, Beijing, 100190, China\\
$^5$Institute of Theoretical Physics, State Key Laboratory of Quantum Optics and Quantum Optics Devices, Shanxi University, Taiyuan 030006, China.}

\begin{abstract}
The numerical determination of solitary states is an important topic for such research areas as Bose-Einstein condensates, nonlinear optics, plasma physics, etc. In this paper, we propose a data-driven approach for identifying solitons based on dynamical solutions of real-time differential equations. Our approach combines a machine-learning architecture called the complex-valued neural operator (CNO) with an energy-restricted gradient optimization. The former serves as a generalization of the traditional neural operator to the complex domain, and constructs a smooth mapping between the initial and final states; the latter facilitates the search for solitons by constraining the energy space. We concretely demonstrate this approach on the quasi-one-dimensional Bose-Einstein condensate with homogeneous and inhomogeneous nonlinearities. Our work offers a new idea for data-driven effective modeling and studies of solitary waves in nonlinear physical systems.
\end{abstract}

\maketitle

\section{Introduction}
{Steady soliton solutions of nonlinear partial differential equations (PDEs) arise in a wide range of contexts in physics,} including Bose-Einstein condensates (BECs) \cite{Frantzeskakis2010, Strecker2003, Kartashov2011}, nonlinear optics \cite{Kartashov2011, Kivshar1993, Maimistov2010, Dudley2009}, and plasma physics \cite{Kauranen2012, Kuznetsov1986}. {Since analytical solutions to nonlinear PDEs are generally difficult to obtain (especially for non-integrable PDEs), numerical identification of solitons constitutes an important subject for both theoretical inquiries and practice. From a theoretical perspective, the ground-state solitary solution can unveil the equilibrium characteristics of nonlinear systems, such as long-range order and topological structures; the lifetime of soliton solutions can offer valuable insights into the stability and response to perturbations near equilibration. From the practical point of view, the study of solitons has potential implications in long-distance communication \cite{Ablowitz2000, Haus1996, Hasegawa2000}, data transmission, and in laying the foundation for advanced photonic devices and data storage technologies \cite{Wu2004, Hong2003, Xu2017}.}

{The solitary waves that we are interested in are separable in time and space. Several traditional methods are available for finding this type of solitons \cite{Yang2010, Kelley2003, Boyd2001}. For example, the complex evolution method extends the real-time of PDEs to complex or imaginary values \cite{Bao2004, Yang2008}. Since complex-time evolution always reduces energy, this method is more suitable for finding ground-state solitons. Moreover, several methods have been developed based on stationary equations for both ground-state and excited-state solitons, where the time dependence has been eliminated using the space-time separation. For instance, the shooting method is primarily used for 1D systems; the Petviashvili method \cite{Petviashvili1976} and its extensions \cite{Musslimani2004, Ablowitz2005, Lakoba2007} are suitable for finding higher-dimensional ground state solitons; Newton's method \cite{Kelley2003, Boyd2001} and its advanced versions, such as the CG Newton's method \cite{Yang2009}, iteratively search for ground state or excited state solitons starting from trial solutions and can be extended to higher dimensional systems.} Some machine learning methods have also been applied in finding the solitons, e.g., the variational neural network ansatz \cite{Luo2022}, the deep residual \cite{Nabian2019}, convolutional neural network \cite{Karumuri2020, Gao2021}, as well as the generative models \cite{Zhu2019}. Furthermore, there are also some dynamical PDE solvers based on machine learning \cite{Lagaris1998, Blechschmidt2021}. {Physics-informed neural networks have been proposed as a powerful tool to approximate the dynamical solutions by incorporating the governing equations as soft constraints during the training process \cite{Raissi2019, Raissi2018}. The Feynman-Kac formula-based methods \cite{Beck2019, Beck2021} and stochastic equation-based methods \cite{Blechschmidt2021, Beck2020} have been reported as well.} This class of solvers, however, is designed to find the dynamical solutions of PDEs for given initial states, which cannot be used for determining solitary solutions directly.

In this paper, we propose {a data-driven approach to search for the solitons based on a machine-learning architecture called the complex-valued neural operator (CNO). We were motivated by the question of that is it possible to identify solitary wave solutions by directly looking at the real-time PDE, rather than its variants such as the imaginary-time PDE or the stationary equation. Since solitary waves exhibit space-time separation, in principle, we can use traditional PDE solvers (e.g., Euler or Runge-Kutta (RK) solvers) to identify solutions where the initial and final states differ by only a phase factor. There are two challenges for such a task. First, iterating through the initial states by PDE solvers is highly time-consuming. Second, finding a specific solution is an optimization problem, and hence it would be ideal to use the gradient descent. Traditional PDE solvers, however, cannot compute the derivative with respect to initial states. These two challenges can be well addressed by the neural operator (NO) \cite{li2020fourier, Kovachki2021}, which can establish the continuous mapping between the real-value functions during training. We extend the NO to the complex domain, namely the CNO with complex layers and complex activation functions, to accommodate the complex dynamical PDEs. We further develop an energy-restricted optimization algorithm to reduce the spaces of states during the search process.} Our approach is concretely demonstrated on the one-dimensional Gross-Pitaevskii (GP) equations with both homogeneous and inhomogeneous nonlinearities. Furthermore, we show that the trained CNO can be also used for the stability analysis of solitary states.

The rest of this paper is organized as follows: In Sec.~\ref{sec:method}, we present the basic idea of our approach, { and show the CNO architecture.} In Sec.~\ref{sec:Soliton Search}, we use our method in both the homogeneous GP and the inhomogeneous GP equations to identify the solitons. In Sec.~\ref{sec:NoiseAnalysis}, we present the application of the CNO to the stability analysis of solitons. A brief summary and outlook can be found in Sec.~\ref{sec:Summary}. 

\section{General Method} \label{sec:method}
Let us generically consider a class of nonlinear PDEs of a one-dimensional physical system
\begin{equation}
    f(\psi,\dot{\psi},t)=0,
    \label{GPDE}
\end{equation}
with $\psi(x,t)$ being a complex-valued function. From the perspective of field theory, the PDEs come from the Euler-Lagrange equation $\partial_{\psi} L = d_t (\partial_{\dot{\psi}}L)$ with $L(\psi,\dot{\psi},t)$ being the Lagrangian. For example, the GP equation is generated by the nonlinear Schr{\"o}dinger Lagrangian, and the nonlinear Klein-Gordon equation arises from the Klein-Gordon Lagrangian with mass or high-order potentials. The solitons that we are interested in are space-time separable, i.e.,
\begin{equation}
\psi(x,t)=e^{-i\alpha t}\phi(x),
    \label{psi}
\end{equation}
where $i = \sqrt{-1}$ is the imaginary unit, and $e^{-i\alpha t}$ is a time-dependent phase factor that is isolated from the spatial solitary state $\phi(x)$. Using the time-space separation of $\psi(x,t)$, the temporal degree-of-freedom of Eq.~(\ref{GPDE}) can be eliminated, which leads to the stationary equation purely in terms of $\phi(x)$, solving which one can obtain the time-independent solitons. {Due to the nonlinearity of $f$, the stationary equation is generally not a linear eigenstate equation, and various iterative methods have been developed for such stationary equations \cite{Kelley2003, Yang2010}.}

\subsection{Basic Idea}
{We aim to identify solitary solutions by directly looking at the real-time dynamical PDEs.} In such a context of {real-time dynamics, the PDE [Eq.~(\ref{GPDE})]} maps the initial states to the final states, i.e., $G: \psi(x,0) \mapsto \psi(x,T)$, and the solitary solutions correspond to the fixed points of the mapping with unit fidelity $F=1$, where
\begin{equation}
    F=\frac{1}{N^2}\left|\int dx \psi^*(x,T) \psi(x,0)\right|^2,
    \label{F}
\end{equation}
and $N = \int dx |\phi(x)|^2$ is the total particle number, and $\psi^*(x,T)$ is the complex conjugation of $\psi(x,T)$.
Considering the fact that the space of states is continuous and quite large, it's not feasible to locate the solitons by directly iterating through the initial states $\psi(x,0)$ using PDE solvers. {To address this issue, we developed the CNO and a restricted searching algorithm. Our algorithm majorly consists of two steps: i) Learn the initial-final state mapping $G$ by training the CNO based on a finite-size dataset $\mathcal{D}$; ii) Perform restricted gradient optimization to find the solitons within a specific energy range. Below we explain in detail.} 
\begin{figure}[t]
	\includegraphics[width=0.48\textwidth]{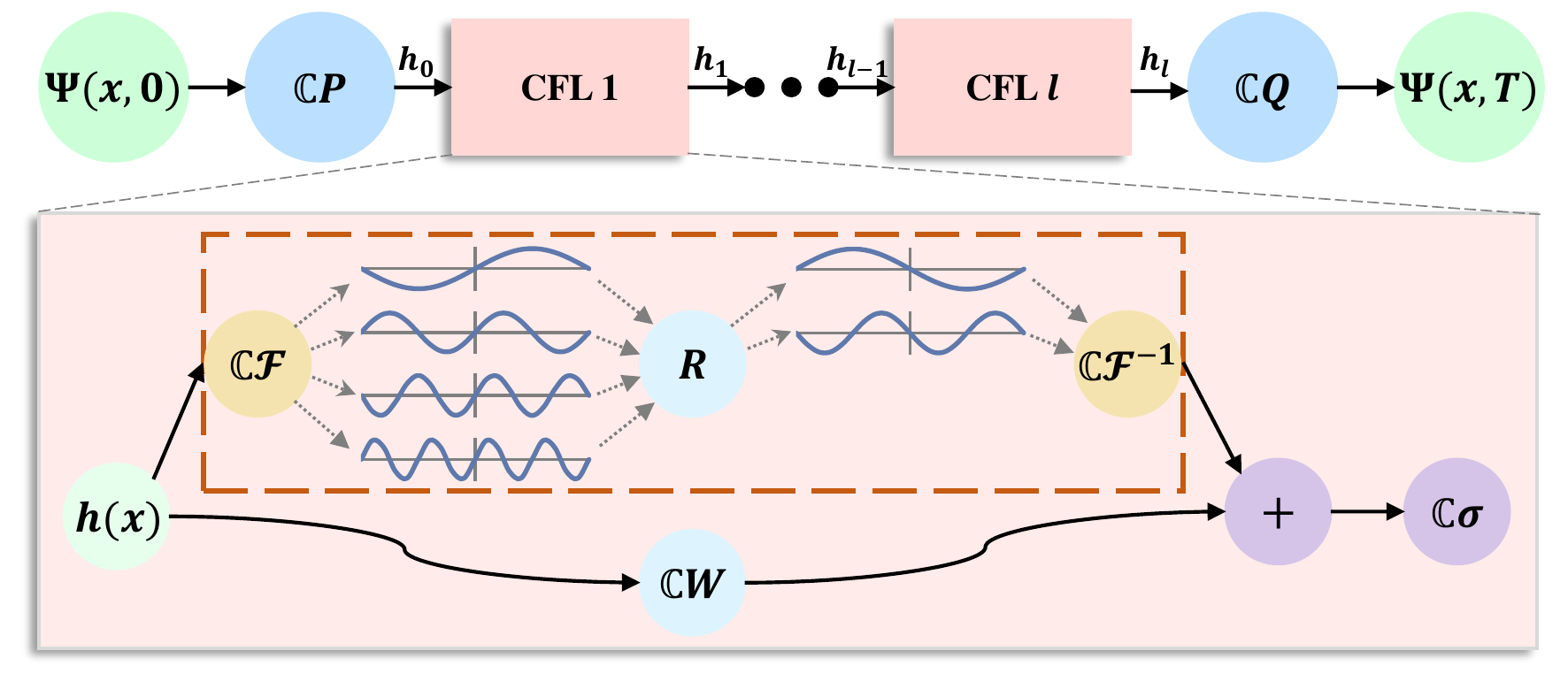}%
	\caption{Upper panel: Architecture of a CNO where initial $\psi(x,0)$ and final $\psi(x,T)$ states serve as the input and output, {$\mathbb{C}P$ and $\mathbb{C}Q$} correspond to the complex embedding and decoding layers, and $h(x)$ are the hidden states. Lower panel: Detailed structure of a CFL with {$\mathbb{C}\mathcal{F}$ and $\mathbb{C}\mathcal{F}^{-1}$} respectively the complex Fourier and inverse Fourier transformations, $R$ the Low-pass filter layer, {$\mathbb{C}W$ a local complex transformation, and $\mathbb{C}\sigma$ the complex-valued nonlinear activation}.}
	\label{fig:CNO}
\end{figure}

{As is schematically shown in Fig.~\ref{fig:CNO}, CNO is a trainable machine-learning architecture that is able to learn the relation between the initial state $\psi(x,0)$ (as the input) and the final state $\psi(x,T)$ (as the output) based on a finite-size dataset $\mathcal{D}$. More detailed descriptions of CNO can be found in subsection~\ref{CNO}. The dataset $\mathcal{D}$ is composed of the pairs of initial-final states, i.e.,
\begin{equation}
    \mathcal{D}=\{(\psi_k(x,0),\psi_k(x,T))|k=1,2,...,M\},
    \label{D}
\end{equation}
with $\psi_k(x,0)$ being random differentiable functions, and $M$ being the total number of data points. Practically, $\psi_k(x,T)$ can be generated by the numerical propagating Eq.~(\ref{GPDE}) by certain PDE solvers, e.g., the Euler or the RK solvers.} CNO learns the mapping $G$ by minimizing the loss function $\mathcal{L}$, which refers to the training process. The loss function $\mathcal{L}$ is defined to characterize the effective distance between the network prediction $\psi_\text{pred}(x,T)$ and the solutions $\psi(x,T)$ in the dataset $\mathcal{D}$. Practically, in the following discussion we use the $L_2$-norm loss function,
\begin{equation}
    \mathcal{L}=\frac{1}{M}\sum_k\int dx\frac{\left|\psi_{k,\text{pred}}(x,T)-\psi_k(x,T)\right|}{\left|\psi_k(x,T)\right|}.
    \label{equ:l2loss}
\end{equation}

Once the mapping $G$ is learned by the CNO, the trained CNO can be used to find the solitons by minimizing $\Delta F=|F-1|$ with respect to the initial state $\psi(x,0)$. {This is an optimization process, which can be carried out using the gradient descent method. Here, we emphasize that the CNO-based soliton search has two major advantages. First, the forward propagation of CNO is much faster than traditional PDE solvers. The detailed discussion on the computational complexity will be shown in subsection~\ref{CNO}. In Sec.~\ref{sec:NoiseAnalysis}, we additionally show a concrete running-time comparison between CNO and traditional PDE solvers in the context of stability analysis of solitons. Furthermore, the output of the CNO is continuously derivable to its input, which thus naturally facilitates the gradient-based optimization of $\Delta F$.

Although the trained CNO can provide several computational conveniences as mentioned above, the vast space of initial states remains a significant obstacle in our practical pursuit of solitons. This motivated us to reduce the search space, which is based on Noether's theorem. Noether's theorem establishes a relationship between symmetries and conserved charges, with the latter being of help in space reduction.} For example, in systems with time translation symmetry, such as the GP equation or nonlinear Klein-Gordon equation, the energy (or the Hamiltonian) $E=\int dx \Pi\dot{\psi} -L$ is conserved, with $\Pi=\partial_{\dot{\psi}}L$ being the conjugate field. {Therefore, we can perform the energy-restricted optimization by adding an inequality constraint into the minimization of $\Delta F$.} To be more specific, if we set an energy upper bound $E\le E_\text{max}$, the optimization problem is reformed as 
\begin{equation}
\begin{aligned}
    \min_{\psi(x,0)} \Delta F \text{\ \ \ \ \ subject to \ \ \ } E_{\psi(x,0)} \le E_\text{max}.
    \label{equ:OP}
\end{aligned}
\end{equation}
Using the augmented Lagrangian method \cite{bertsekas2014constrained}, the above optimization problem can be rewritten as
\begin{equation}
\begin{aligned}
   \min_{\psi(x,0),\lambda, \gamma} \Delta F + \frac{\gamma}{2}\left(\max\{\frac{\lambda}{\gamma} + E_{\psi(x,0)}- E_\text{max},0\}^2-\frac{\lambda^2}{\gamma^2}\right),
    \label{equ:OP_New}
\end{aligned}
\end{equation}
where $E_{\psi(x,0)}$ is the energy of the input $\psi(x,0)$, $\lambda$ is the Lagrange multiplier, and $\gamma$ is the penalty factor. 

{
\subsection{Complex-Valued Neural Operator}
\label{CNO}
Here, we in detail discuss the CNO.} Neural operators (NO) are machine learning models proposed to learn mappings between functions \cite{li2020fourier, Kovachki2021}. {Since the mapping $G$ here is generally complex, we have to extend the NO from the real to the complex domain.} Taking the Fourier neural operator as a backbone \cite{li2020fourier}, we propose the {Complex-valued NO, i.e., the CNO. There have been some recent studies on the complex-valued neural networks \cite{Scardapane2018, Bassey2021, Trabelsi2017Deep, arjovsky2016unitary, guberman2016complex}. One known challenge lies in the fact that complex-valued activation functions are not simultaneously complex-differentiable and bounded \cite{Scardapane2018, Bassey2021}, which leads to complex neural networks still being an open and active research topic.}

Our CNO, as shown in Fig.~\ref{fig:CNO}, is formulated as a multi-layer architecture with the initial state $\psi(x,0)$ being the input and the final state $\psi(x,T)$ the output, i.e.,  {
\begin{equation}
\psi(x,T)=\mathbb{C}Q(\text{CFL}^l(\cdots(\text{CFL}^1(\mathbb{C}P(\psi(x,0)))))).
\end{equation}
Here, $\mathbb{C}{P}(\psi(x,0))$ and $\mathbb{C}Q(h_l)$ layers are the complex embedding and complex decoding layers,} where the former can lift the input to a higher-dimensional hidden space to ensure the expressiveness of the model, while the latter works oppositely to make the output have the same dimension as the input.

Feature learning takes place in the hidden space, which consists of several complex-valued Fourier layers (CFLs). In CFL $j+1\ (j=0,1,...,l-1)$, the hidden state $h_j$ are projected to $h_{j+1}$ as{
\begin{equation}
h_{j+1}(x) = \mathbb{C}\sigma\left[\mathbb{C}Wh_j(x)+\mathbb{C}\mathcal{F}^{-1}(R\cdot \mathbb{C}\mathcal{F}(h_j(x)))\right],
\label{equ:CNO}
\end{equation}
where $\sigma$ denotes the complex-valued element-wise non-linear activation, $\mathbb{C}W$ is a complex-valued convolutional network which implements the linear transformation on $h_j$, $\mathbb{C}\mathcal{F}$ and $\mathbb{C}\mathcal{F}^{-1}$ are the complex Fourier transformation and inverse Fourier transformation,} and $R$ denotes the Low-pass filter defined on the frequency space. {The main points in the construction of CFLs lie in the complex-valued convolution $\mathbb{C}Wh_j(x)$ and the complex-valued activation $\mathbb{C}\sigma(\cdot)$. Below, we will discuss them one by one:}
\begin{itemize}[leftmargin=12pt]
\item {Complex convolution. To perform the equivalent operation of traditional real-valued 2D convolution in the complex domain, we convolve the hidden complex vector $h=a+ib$ with the complex filter matrix $\mathbb{C}W = A + iB$, where $A$ and $B$ are real matrices, and $a$ and $b$ are real vectors.} Since the convolution operator is distributive, convolving the vector $h$ by the filter $\mathbb{C}W$ can be simply expressed by
\begin{equation}
\mathbb{C}W*h = (A*a-B*b)+i(B*a+A*b),
\label{equ:C_conv}
\end{equation}
{with $*$ denoting the convolutional operation.}

\item Complex-value activation. {We generalize the real-valued GELU (Gaussian error linear unit) \cite{Hendrycks2016} to its complex counterpart, namely $\mathbb{C}$GELU. }
The $\mathbb{C}$GELU is defined as 
\begin{equation}
\mathbb{C}\text{GELU}(z)=\text{GELU}(\mathcal{R}(z))+i\text{GELU}(\mathcal{I}(z)),
\label{equ:CReLU}
\end{equation}
where both the real $\mathcal{R}(z)$ and imaginary $\mathcal{I}(z)$ parts of a neuron are activated by GELU, {with 
\begin{equation}
\text{GELU}(x) = x\cdot \frac{1}{2}[1+\text{erf}(x/\sqrt{2})],
\end{equation}
$\text{erf}(\cdot)$ the Gauss error function, and $x$ being a real number. 
It is known that GELU is non-convex and non-monotonic, and has been practically applied in many large language models (e.g., OpenAI's GPT \cite{Radford2018} and Google AI's BERT models \cite{Devlin2018}) where it outperforms the convex and monotonic ReLU. Note that we may have some other extensions, such as activating the norm of $z$ while keeping its phase factor unchanged, i.e., $\mathbb{C}\text{GELU}(x) = \text{GELU}(|z|+z_0) \exp(i \arg{z})$ with $z_0$ being a real learnable parameter. The performance of various extensions remains to be further studied, but this is not the focus of this article. Therefore, in the following, we consistently adopt the $\mathbb{C}\text{GELU}$ defined in Eq.~(\ref{equ:CReLU}).}
\end{itemize}

{Now, we discuss the complexity of CNO. In a CFL, the time complexity majorly comes from the Fourier and the inverse Fourier transforms which provides a complexity $ O(n \log{n})$ with $n$ being the dimension of the input. Hence, the total complexity of a CNO with $l$ CFLs is $O(l n \log{n})$.} On the other hand, traditional explicit PDE solvers based on the finite difference and the pseudo-spectrum are well-known to exhibit complexity $O(t_s n^2)$ and $O(t_s n \log{n})$, respectively, with $t_s$ being the time steps. Hence, the forward propagation of CNO should be significantly faster than traditional explicit solvers as $l\ll t_s$. { This property not only facilitates our soliton identification algorithm mentioned above, but also speeds up the stability analysis that will be shown in Sec.~\ref{sec:NoiseAnalysis}. We will numerically compare the propagation efficiency of the CNO and traditional PDE solvers in Sec.~\ref{sec:NoiseAnalysis}.}

\section{Soliton Identification}
\label{sec:Soliton Search}
We demonstrate our approach on the one-dimensional GP equation 
\begin{equation}
    i \partial_t \psi = \left[-\frac{\partial_x^2}{2m}+V(x)+g(x)\left|\psi\right|^2\right] \psi,
    \label{1DGPE}
\end{equation}
which is generated by the Lagrangian (setting $\hbar=1$)
\begin{equation}
    L=\int dx \left[i \psi^*\dot{\psi}-\frac{|\partial_x\psi|^2}{2m}-V(x)|\psi|^2-\frac{g(x)}{2}|\psi|^4\right].
    \label{GPELagrangian}
\end{equation}
It describes a quasi-one-dimensional BEC with $N$ atoms being tightly confined in the transverse $y$-$z$ directions, with $\psi(x,t)$ being the longitudinal field satisfying $\int dx |\psi(x,t)|^2 = N$.
The first, second, and third terms on the R.H.S. of Eq.~(\ref{1DGPE}), respectively, denote the kinetic energy, the longitudinal potential along $x$, and the nonlinear interaction arising from the two-body s-wave collision of atoms with $g=2 a_s/m\ell_\perp^2$ the reduced nonlinearity strength, $a_s$ the scattering length, $\ell_\perp$ the transverse confinement length, and $m$ the atomic mass. $g$ is commonly set positive to stabilize the BEC \cite{Pethick2008,Pitaevskii2003}. Particularly, 
an equation with a space-independent interaction $g(x) = g_0$ is known as the homogeneous GP equation, whereas if $g(x)$ is space-dependent, the equation is called the inhomogeneous GP equation \cite{Beitia2007, Sivan2006}. Practically, spatial inhomogeneity $g(x)$ can be achieved using the confinement-induced resonance technique \cite{Olshanii1998, Bergeman2003}, i.e., properly engineering the transverse confinement near the orbital resonance point to induce spatial inhomogeneity in the scattering length. The energy of the BEC is simply the Hamiltonian of the Schr{\"o}dinger field, i.e., 
\begin{equation}
    E = \int dx \left[\frac{|\partial_x\psi|^2}{2m}+V(x)|\psi|^2+\frac{g(x)}{2}|\psi|^4\right],
    \label{E}
\end{equation}
which is conserved during the time evolution, as mentioned before. 

{It is known that almost all solitary solutions of the 1D GP equation can be anticipated using Newton's method or other eigensolvers, based on trial solutions constructed from the eigenstates of the linear Schrödinger equation (with the nonlinearity turned off) or their variations. These solitons provide the ground truth to verify the effectiveness of our algorithm, which is the main objective of this paper. On the other hand, the trade-off is that there are rarely unexpected solutions in the 1D system and hence generalizing our algorithm to higher-dimensional systems would be more practically valuable. This generalization is not so straightforward as will be discussed in Section \ref{sec:Summary}.}

\subsection{Homogeneous GP Equation}
For the first example, we consider a homogeneous BEC confined in a harmonic potential $V(x)=\omega^2 x^2/2$ with fixed $g(x)=g_0=20\omega/N\ell$ and $\ell=(m\omega)^{-1/2}$ denoting the harmonic length. We generate a training dataset $\mathcal{D}_\text{train}$ of size $M_\text{train}=10^4$ and a testing dataset $\mathcal{D}_\text{test}$ of size $M_\text{test}=10^3$ for such a learning task, where the data points are collected by independently propagating the GP equation Eq.~(\ref{1DGPE}) using the RK finite difference method \cite{Scherer2013} from the initial states $\psi(x,0)$ to the final state $\psi(x,T)$ with $T=\omega^{-1}$ being fixed. The initial states $\psi(x,0)$ are randomly generated using the harmonic basis, i.e., $\psi(x,0) = \sum_{n=1}^{n_c} c_n \xi_n$ where $c_n$ denotes the expansion coefficients satisfying $\sum_n |c_n|^2=1$, $n_c=20$ is the basis cutoff, and 
\begin{equation}
    \xi_n(x) = \frac{1}{2^n n!}\left(\frac{1}{\pi R^2}\right)^{1/4}H_n\left(\frac{x}{R}\right)e^{-\frac{x^2}{2R^2}}
    \label{xiHG}
\end{equation}
with $H_n$ being the Hermite polynomial and $R=(3gN\ell^2/2\omega)^{1/3}$ being the Thomas-Fermi radius. Employing the harmonic basis, rather than directly randomizing the $\psi(x,0)$ in the coordinate space is based on the physical consideration that the complex wave $\psi(x,t)$ and its first-order derivative $\partial_x\psi(x,t)$ should be continuous for a potential $V(x)$ without singularity. However, the soliton identification is still conducted in the coordinate space.

\begin{figure}[!ht]
    \includegraphics[width=0.48\textwidth]{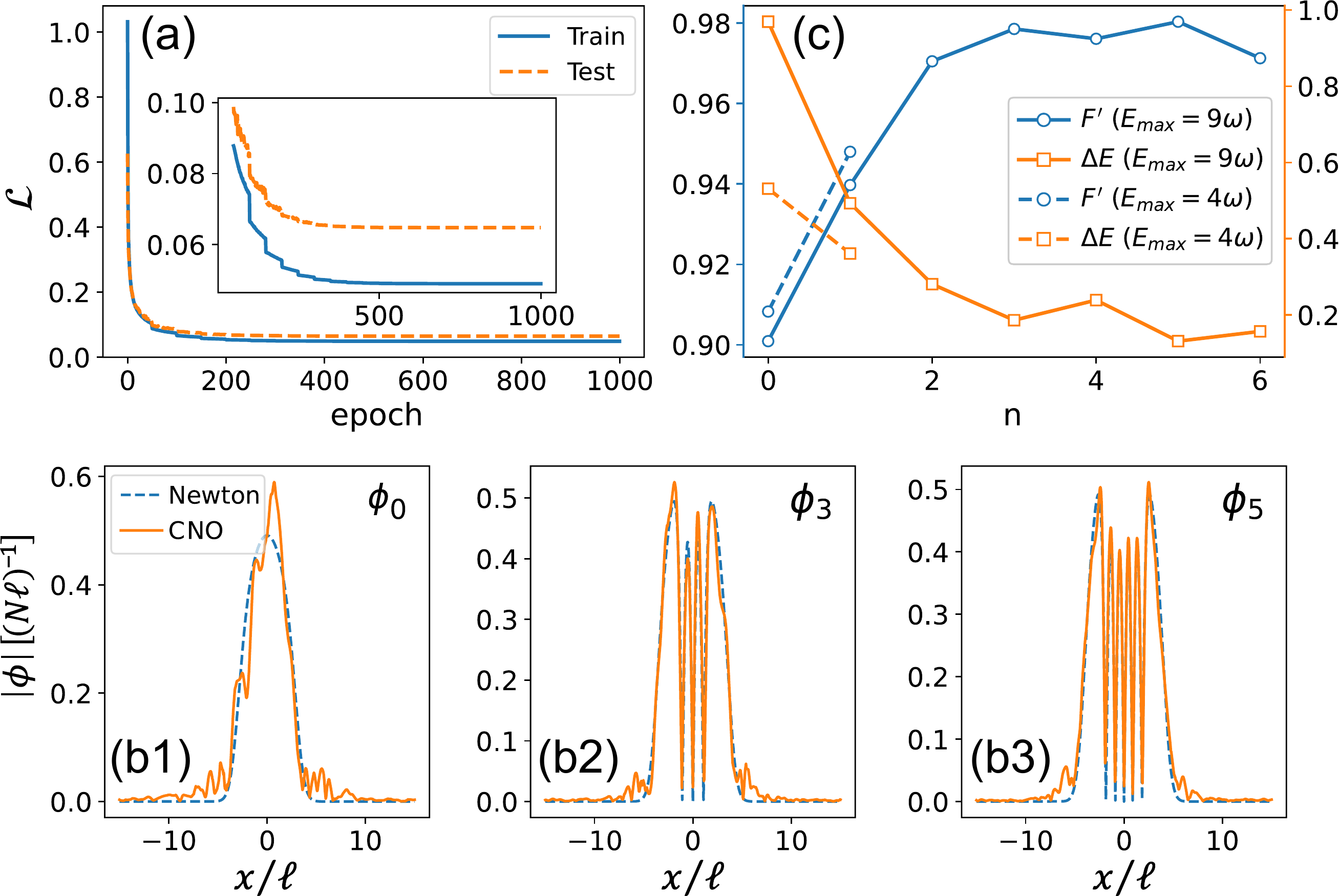}
   \caption{Numerical results of the homogeneous GP equation. (a) Variation of the loss function $\mathcal{L}$ on epochs during the training process, where the solid and dashed curves indicate $\mathcal{L}$ on the training $\mathcal{D}_\text{train}$ and testing  $\mathcal{D}_\text{test}$ datasets, respectively. (b) Comparison between the solitary states identified by the CNO (solid lines) and those obtained by Newton's method (dashed lines). Panels (b1), (b2) and (b3) display the amplitude of the ground-state $|\phi_0|$ , the 3rd excited-state $|\phi_3|$, and the 5th excited-state $|\phi_5|$, respectively. (c) The overlap $F'$ (circle) and the energy error $\Delta E$ (square) as a function of the soliton index $n$. Solid lines correspond to the case of $E_\text{max}=9\omega$, whereas dashed lines correspond to the case of $E_\text{max}=4\omega$.}
	\label{fig:x2}%
\end{figure}

Through feeding $D_\text{train}$ to the CNO, we carry out the training by minimizing the loss function $\mathcal{L}$ [Eq.~(\ref{equ:l2loss})] using the Adam optimizer \cite{Kingma2014}. Our CNO contains $l=4$ CFLs, and the embedding dimension is $64$.
Fig.~\ref{fig:x2}(a) presents $\mathcal{L}$ as a function of training epochs, where the solid and dashed curves indicate $\mathcal{L}$ on the training $\mathcal{D}_\text{train}$ and the testing $\mathcal{D}_\text{test}$ datasets, respectively. The inset takes a closed look at $\mathcal{L}$ within the range $[0.04,0.1]$. It is shown that the training process converges at about $450$ epochs, {as indicated by the loss function $\mathcal{L}$ reaching a broad plateau. The training error and testing error exhibit similar behavior with small quantitative differences, which indicates the trained CNO does not suffer from a severe overfitting problem.}

After convergence, we perform the energy-restricted optimization [Eq.~(\ref{equ:OP_New})] to identify the soliton states based on the trained CNO. Practically, we set the energy bound to $E_\text{max}=9\omega$, and optimize Eq.~(\ref{equ:OP_New}) using Adam optimizer from $10^3$ stochastic initial states.
As a result, the lowest 7 solitary states can be identified. In Fig.~\ref{fig:x2}(b1)-(b3), we typically display the amplitude of the ground-state $|\phi_0|$, the 3rd excited-state $|\phi_3|$, and the 5th excited-state $|\phi_5|$ by solid lines, respectively. As a comparison, we also show the amplitude of solitons $|\phi_n|$ obtained by Newton's method by dashed lines, which serve as the ground truth. Both results are in good qualitative agreement. {Quantitatively, the solitons found by the CNO shown in Fig.~\ref{fig:x2}(b) do not perfectly matches the ground truth. The mismatch can be attributed to the intrinsic error of CNO in learning the mapping $G$. In fact, training errors are inevitable for any data-driven machine-learning model.}

To quantitatively measure the discrepancy between the solitons found by the CNO and Newton's method, we calculate the overlap (solid line with circles) 
\begin{equation}
    F'=\left|\int dx \phi_{n,\text{CNO}}^*(x)\phi_{n,\text{Newton}}(x)\right|^2,
    \label{Fp}
\end{equation}
and energy discrepancy (solid line with squares) 
\begin{equation}
    \Delta E= \left| E_n^\text{CNO} - E_n^\text{Newton}\right|,
    \label{DE}
\end{equation}
as are plotted in Fig.~\ref{fig:x2}(c).
One can observe that the identified solitons averagely have $F'\ge 96\%$ and $\Delta E\le0.35$. In Fig.~\ref{fig:x2}(c), $\Delta E$ roughly decreases monotonically as $n$ increases, which means the solitons $\phi_{n,\text{CNO}}$ closer to the energy bound $E_\text{max}$ exhibit a lower energy error $\Delta E$. This phenomenon can be understood that $F$ [Eq.~(\ref{F})] is generally non-convex near a fixed point, i.e., there are several local minima $\Delta F\gtrsim 0$ near the point $\Delta F=0$. The energy constraint (the last term of Eq.~(\ref{equ:OP_New})) becomes more and more important as $E_\psi(x,0)$ approaches the bound $E_\text{max}$, thereby decreasing the likelihood that the algorithm will achieve a local minimum. Otherwise, if $E_\psi(x,0)$ is too far away from $E_\text{max}$, the last term of Eq.~(\ref{equ:OP_New}) is simply a constant referring to no restriction. In this case, the searching algorithm may converge to a local-minimum solution giving rise to a large $\Delta E$. A mathematical proof of this statement can be found in Appendix \ref{app:ERS}. For numerical verification, we set a small bound $E_\text{max}=4\omega$ such that only the lowest two solitons ($\phi_0$ and $\phi_1$) are allowed to be identified. The corresponding $F'$ and $\Delta E$ are plotted by dashed lines in Fig.~\ref{fig:x2}(c), from which one can clearly observe that a lower bound $E_\text{max}$ is really helpful to reduce the error $\Delta E$.

\begin{figure}[t]
    \includegraphics[width=0.48\textwidth]{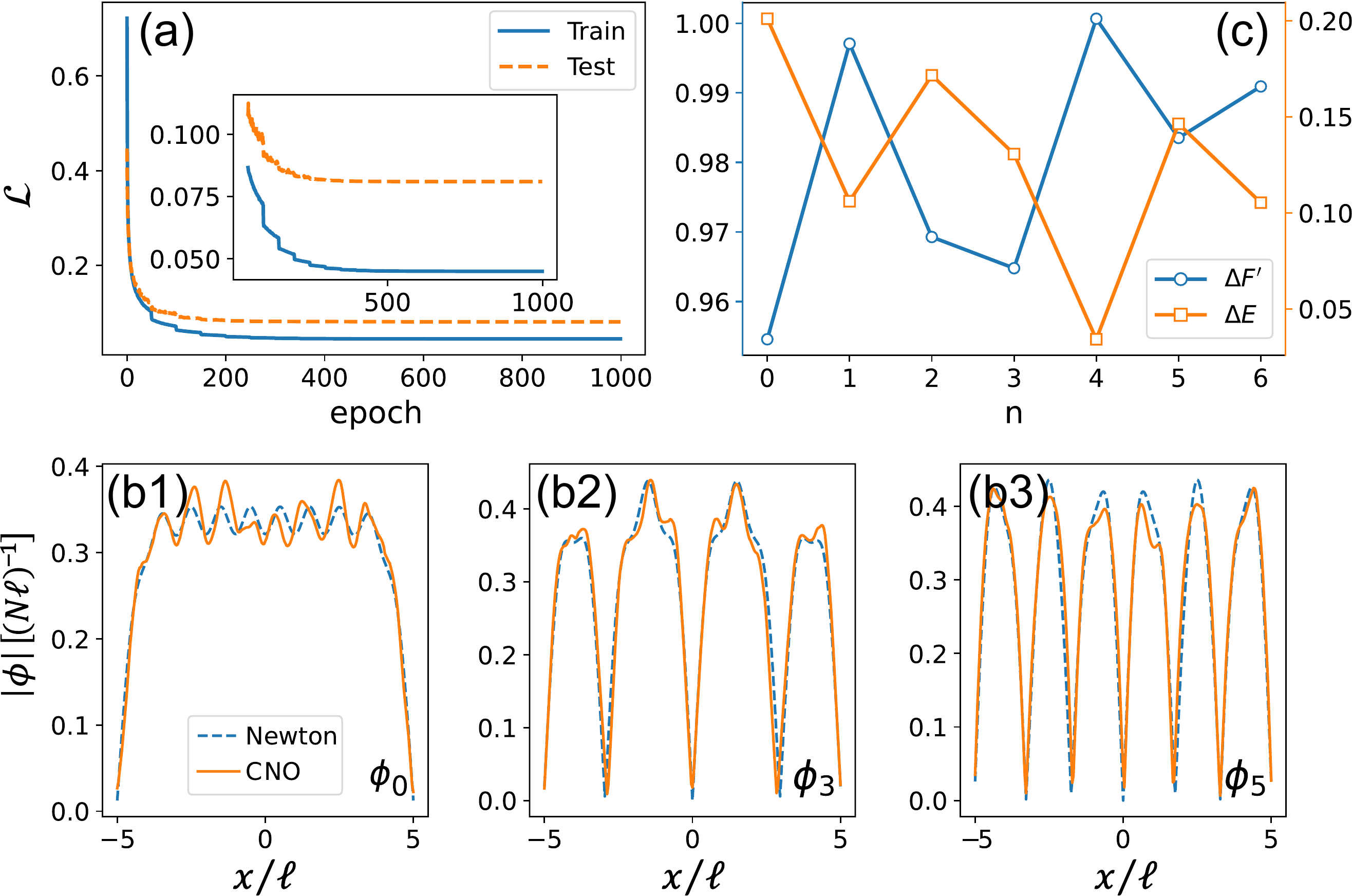}
	\caption{Numerical results of the inhomogeneous GP equation. (a) Variation of the loss function $\mathcal{L}$ on epochs during the training process. (b) Comparison between the solitary states identified by the CNO (solid lines) and those obtained by Newton's method (dashed lines). (c) The overlap $F'$ (circle) and the energy discrepancy $\Delta E$ (square) as a function of the soliton index $n$.}
	\label{fig:box}
\end{figure}

\subsection{Inhomogeneous GP Equation}
We illustrate the second example by considering a BEC carrying inhomogeneous nonlinearity $g(x)=g_0+\delta g \sin(2\pi x)$ and being confined in a boxed potential within $x\in[-R,R]$, where the homogeneous $g_0$ is taken with the same value as in the first example, and $\delta g = g_0/2$ characterizes the inhomogeneous nonlinearity with $R=5\ell$ being fixed. The data generation is similar to the first example, except that now we adopt a basis of trigonometric polynomial, i.e., 
\begin{equation}
    \xi_n(x) = \frac{1}{R}\sin\left[\frac{n \pi}{2R}(x+R)\right],
    \label{xiSin}
\end{equation}
with $n_c=15$.
The trigonometric basis ensures that the complex field $\psi(x)$ vanishes at both boundaries of the box $x=\pm R$, which is reasonable for a boxed potential with hard walls. We train the CNO based on the training dataset $\mathcal{D}_\text{train}$. The hyper-parameters of the CNO are the same as those of the first example.

In Fig.~\ref{fig:box}(a), we show the variations of $\mathcal{L}$ as the training processes, which indicates that the training converges at about $400$ epochs. We set the upper energy bound to $E_\text{max} = 4$ for the soliton search, and as a consequence, the lowest 7 solitons are obtained and displayed in Fig.~\ref{fig:box}(b1)-(b3). Again, the soliton states obtained by Newton's method are also shown in Fig.~\ref{fig:box}(b) as a reference. It can be observed that, although the potential energy is flat inside the box, the amplitude of the field exhibits strong modulations due to the inhomogeneous nonlinearity. Accordingly, in Fig.~\ref{fig:box}(c), we show the overlap $F'$ [Eq.~(\ref{Fp})] and the energy error $\Delta E$ [Eq.~(\ref{DE})] based on the solitons obtained by the CNO and Newton's method, where an averaged $F'\ge 98\%$ and $\Delta E\le 0.13$ are indicated. Also in Fig.~\ref{fig:box}(c), $\Delta E$ shows an overall decreasing trend as $n$ increases, which is similar to the tendency in Fig.~\ref{fig:x2}(c) of the first example.

{\section{CNO-based Stability Analysis}\label{sec:NoiseAnalysis}}
{The trained CNO has learned the mapping between the initial and final states, which allows it to play the role of traditional PDE solvers in the numerical analysis of solitons. Furthermore, since the forward propagation of the CNO is quite faster than traditional PDE solvers as mentioned before, the numerics can be performed with higher efficiency. In this section, we demonstrate the CNO-based stability analysis on both examples above and compare its running time with that of traditional PDE solvers.}

Conventionally, one adds some small perturbations into the solitary waves, i.e., 
\begin{equation}
\psi(x,0) = \phi(x) + \epsilon\Delta \phi(x),
\label{DeltaPhi}
\end{equation}
and then evaluates the responses of the system after a period of evolution, which is known as the linear stability analysis. If the perturbation is not amplified, then the soliton is said to be linearly stable. Otherwise, the soliton is said to be linearly unstable. Solitary waves can also be unstable in a nonlinear fashion, which means the instability arises from the nonlinearity as $\epsilon$ is large enough to exceed the linear response regime. To fully understand the stability of a soliton, both linear and nonlinear effects should be considered.

\begin{figure}[!ht]
	\includegraphics[width=0.48\textwidth]{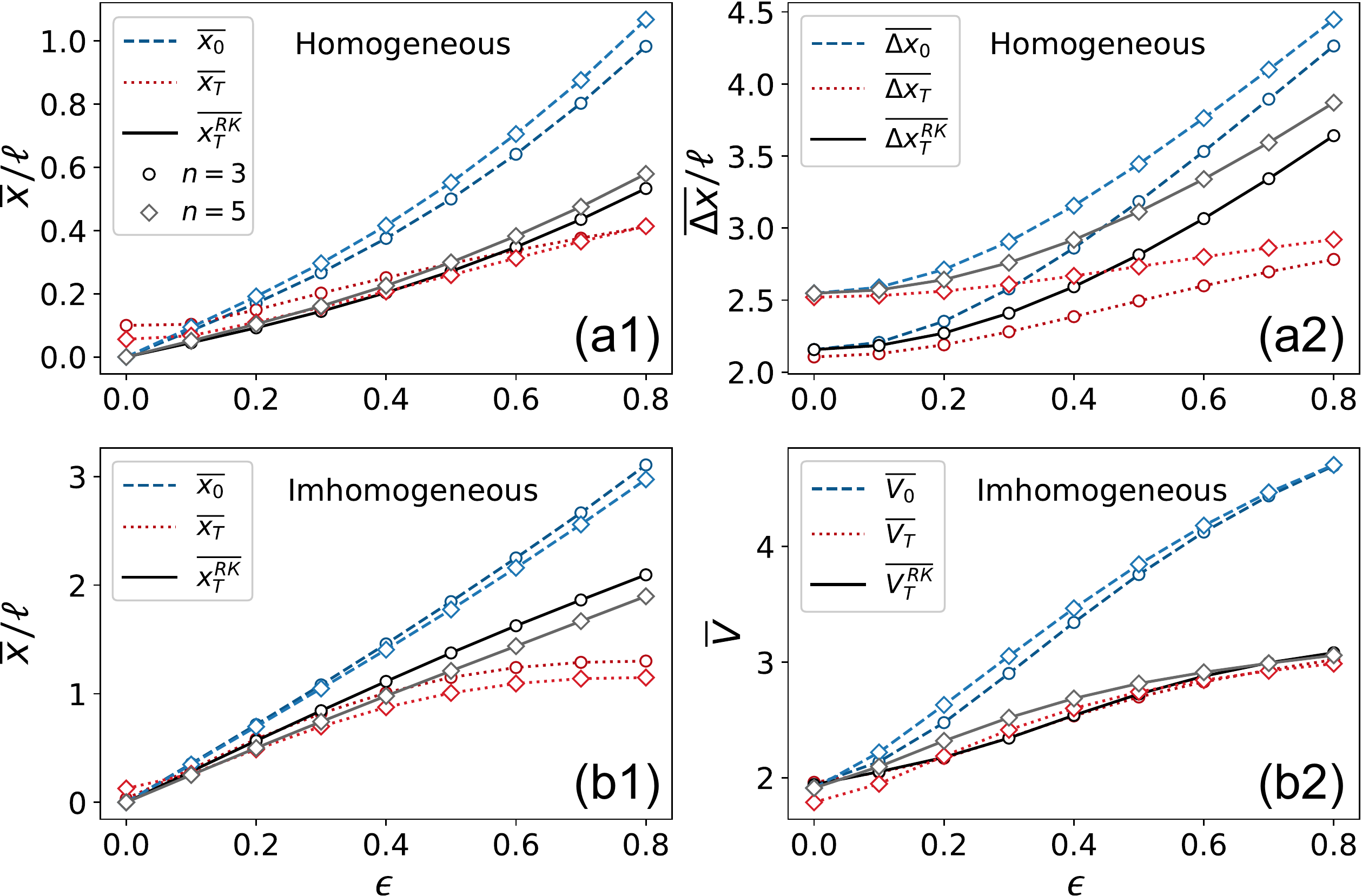}
	\caption{Stability analysis. (a) and (b) correspond to the cases of the homogeneous and inhomogeneous GP equations. (a1) and (b1) show the center of mass $\overline{x}$ as a function of $\epsilon$, where the dashed lines denote $\overline{x}$ of the initial state $\psi(x,0)$; the dotted lines and the solid lines indicate $\overline{x}$ of the final state $\psi(x,T)$ using the CNO and the RK method, respectively. (a2) shows the variation of the envelop width $\overline{\Delta x}$. (b2) shows the variation of the density visibility $\overline{V}$. In each panel, circles and diamonds correspond to the results of the third and the fifth solitons, respectively.}
	\label{fig:stability}
\end{figure}

We carry out the stability analysis of the two examples by feeding the trained CNO with perturbed initial states $\psi(x,0)$ [Eq.~(\ref{DeltaPhi})], where $\Delta \phi(x) = \sum_n c_n \xi_n$ is randomly generated using the basis Eqs.~(\ref{xiHG}) and (\ref{xiSin}), and $\epsilon$ denotes the overall amplitude of perturbation. {After an evolution time of $T$, the response to perturbations is embedded in the output of CNO, i.e., $\psi_\text{pred}(x,T)$. We then examine several observables to quantify the responses.} For the first example, we look at the center-of-mass of the BEC $\overline{x_{t=\{0,T\}}}=\overline{|\langle x(t) \rangle|}$ and its envelop width $\overline{\Delta x_{t=\{0,T\}}}=\overline{\sqrt{\langle x^2(t) \rangle - \langle x(t) \rangle^2}}$, where
\begin{equation}
\begin{aligned}
    \langle x(t) \rangle &= {\frac{1}{N} \int dx x|\psi_\text{pred}(x,t)|^2}, \\
    \langle x^2(t) \rangle &= {\frac{1}{N} \int dx x^2|\psi_\text{pred}(x,t)|^2}.
\end{aligned}
\end{equation}
The additional overscore on the R.H.S. of the $\overline{x_{t}}$ and $\overline{\Delta x_{t}}$ means to take a further average on different perturbations $\Delta \phi(x)$ under a fixed $\epsilon$. For the second example, since the BEC always spreads diffusely inside the box with density modulations, we hence replace $\overline{\Delta x_{t}}$ by the visibility of density, i.e.,{
\begin{equation}
\overline{V_t} = \overline{\frac{|\psi_\text{pred}(x,t)|^2_\text{max} - |\psi_\text{pred}(x,t)|^2_\text{min}}{\rho_\text{avg}}},
\label{vt}
\end{equation}
with $|\psi_\text{pred}(x,t)|^2_{\text{max,min}}$ being the maximal (minimal) density and $\rho_\text{avg}=N/2R$ denoting the averaged density.}

Figs.~\ref{fig:stability}(a) and (b) display the stability calculations of the two examples using the CNO, respectively. To be more specific, the left panels (a1) and (b1) show $\overline{x_{0}}$ (dashed lines) and $\overline{x_{T}}$ (dotted lines) of the 3rd soliton $\phi_3$ (circles) and the 5th soliton $\phi_5$ (diamonds) as functions of the perturbation strength $\epsilon$; the right panel (a2) shows the envelop width $\overline{\Delta x_{0}}$ (dashed lines) and $\overline{\Delta x_{T}}$ (dotted lines) of the first example (homogeneous GP equation); the panel (b2) shows the density visibility $\overline{V_0}$ (dashed lines) and $\overline{V_T}$  (dotted lines) of the second example (inhomogeneous GP equation). In each panel, the black solid line ($\overline{x_{T}^\text{RK}}$, $\overline{\Delta x_{T}^\text{RK}}$, or $\overline{V_{T}^\text{RK}}$) is plotted using the traditional { RK finite-difference method which uses the 4th-order RK formula to deal with the time and the finite difference for the spatial discretization. The RK results serve as the ground truth of the evolution.} One can clearly observe in Fig.~\ref{fig:stability} that the CNO predictions are in qualitative agreement with those obtained by the RK method, especially in the regime of small $\epsilon$. Furthermore, for either $\overline{x}$ or $\overline{\Delta x}$, the response at time $t=T$ is always smaller than that at the initial time $t=0$, which indicates the solitons found out by the CNO are stable. Particularly in the linear response regime $\epsilon\ll 1$, the responses are roughly linear in $\epsilon$, whereas, for $\epsilon$ far away from the linear response regime, both examples exhibit apparent nonlinear behaviors.

{
Finally, let us compare the running time of CNO and traditional PDE solvers. Two traditional algorithms, explicit RK (same as before) and explicit Euler's methods, are considered here. Practically, we evolve the perturbed $\phi_3$ state (with fixed $\epsilon=0.1$) from $t=0$ to $t=T=1/\omega$, repeat this process 100 times, and then take the average of the total running time. Fig.~\ref{fig:time} (a) and (b) respectively show the running time of various methods in the homogeneous and inhomogeneous examples, where the horizontal axis is a time-space ratio \cite{Scherer2013}
\begin{equation}
r=\frac{\delta t}{\delta x^2},
\end{equation}
with $\delta t$ and $\delta x$ being the minimal steps in time and space, respectively. $r$ can be seen as the generalized Courant number \cite{Courant1928}, being closely related to the stability of algorithms. Generally, explicit algorithms would become unstable as $r$ increases. We remind that the stability here refers to the stability condition of a PDE solver, rather than the stability of soliton solutions mentioned above. Algorithms that do not satisfy the stability condition would generate exponentially large numerical errors. Our calculation shows that, in Fig.~\ref{fig:time}, the solid line (RK) and the dotted line (Euler) connected areas are the stable areas, and the unconnected areas on the right are unstable. In our calculations, $\delta x=0.04\ell$ is fixed. To reduce $r$ means to reduce $\delta t$, and hence the total running time behaves $\propto1/\delta t$, i.e., a straight line with a slope of $-1$ in the log-log plots. 

It is clearly shown in Fig.~\ref{fig:time} that, for a given stable $\delta t$ (or $r$), the computational time of CNO is significantly less time than that of RK and Euler's methods, and is also $\delta t$-independent. This is consistent with our complexity analysis in Sec.~\ref{CNO}. Specifically, we chose $l=4$, which is thus a $\delta t$-independent constant being much smaller than $t_s = T/\delta t$. On the other hand, for certain stable $\delta t$, Euler always consumes less time than RK since Euler's method performs fewer calculations within one $\delta t$ iteration.}

\begin{figure}[t]
    \includegraphics[width=0.48\textwidth]{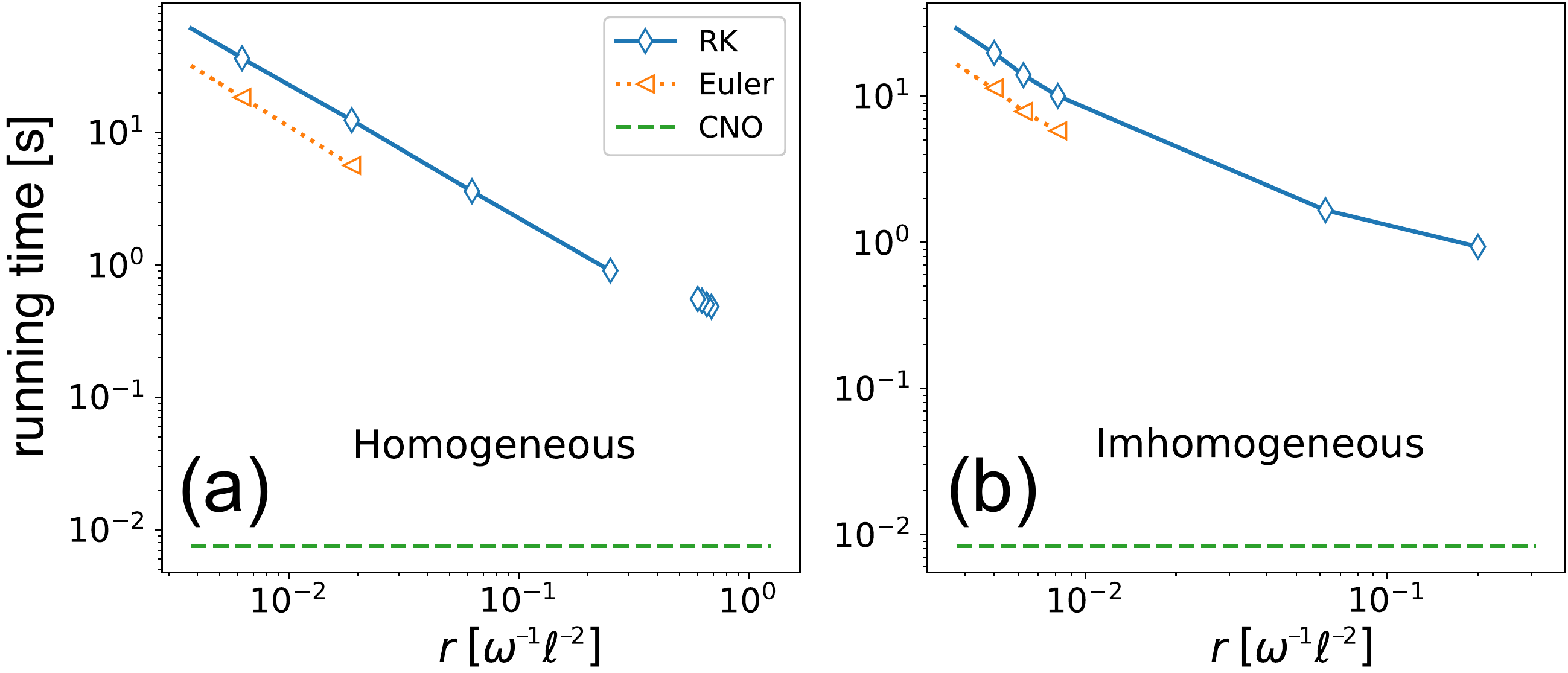}
	\caption{Running time as a function of the time-space ratio $r$ on the examples of homogeneous (a) and inhomogeneous (b) GP equations. In each panel, dashed lines, triangles, and diamonds indicate the running time of CNO, explicit Euler, and explicit RK methods, respectively. For the latter two methods, the connected dotted and solid lines label out the regions where algorithms are stable.}
	\label{fig:time}
\end{figure}

\section{Summary and Discussion}
\label{sec:Summary}
We proposed a data-based approach to search for the solitary solutions of dynamical PDEs. By extending the FNO architecture to the complex field, we developed an architecture called the CNO which can learn the complex mapping between the initial and final states. A combination of the CNO and an energy-restricted search algorithm allows us to identify solitary waves within a limited space of states. Our approach was demonstrated on one-dimensional GP equations with both homogeneous and inhomogeneous nonlinearities, and the resulting solitons exhibited a high overlap with those derived by Newton's method. We also showed the application of the CNO to the stability analysis of solitons. In terms of computational complexity, CNO-based calculations were more efficient than traditional PDE solvers. 

There are a number of follow-up questions. {Extending our algorithm to higher-dimensional systems is not as straightforward as one might expect. In the above 1D examples, we used a dataset of size $10^4$ to ensure that the FNO can capture the initial-final state mapping $G$. In higher dimensions, much more data points are needed, which places a heavy burden on data generation. As a result, randomly generating data in higher dimensions is no longer feasible. In addition, the training process for larger FNO models is also slow and time-consuming. One possible solution is to adopt semi-supervised learning algorithms, such as active learning \cite{Settles2012, Yao2020, Chen2020}, which can serve as an efficient data acquisition strategy to optimize the data generation and learning efficiency.} Furthermore, although symmetries are useful in reducing the search space, as illustrated in the examples, this does not mean that all symmetries are exploitable. In the case of the GP equation with spin-orbit coupling \cite{Lin2011, Li2012, Chen2017} or the Klein-Gordon equation with negative mass \cite{Goldstone1961, Peskin1995}, spontaneous symmetry breaking would lead to phase transitions, such that the solitary states exhibit lower symmetry than the Lagrangian. In such a case, the broken symmetry cannot be used as a constraint. In turn, this motivates us to think about how to use the CNO or some other machine-learning algorithms to identify phase transitions. We expect this work, as well as these questions, to prompt more studies in the fields of machine learning and many-body quantum physics.

\begin{acknowledgments}
This work was done when the first author was visiting Microsoft Research.
L.C. would like to thank Huan-bo Luo for the helpful discussion. L.C. acknowledges support from the National Natural Science Foundation of China (Grant Nos. 12174236 and 12147215) and from the fund for the Shanxi 1331 Project.
\end{acknowledgments}

\appendix

\section{Compliments for the Energy-Restricted Search}\label{app:ERS}
In Figs.~\ref{fig:x2}(c) and \ref{fig:box}(c), we observed the energy error $\Delta E$ exhibiting a decreasing tendency as $n$ increases. This phenomenon can be attributed to the fact that the energy constraint (the last term in Eq.~(\ref{equ:OP_New})) becomes increasingly significant as $E_{\psi(x,0)}$ approaches the bound $E_\text{max}$, rendering the searching algorithm less likely to be trapped in a local minimum, as has been mentioned in the paragraph above Eq.~(\ref{DE}). In the following, we provide a mathematical proof of this statement.
\begin{prop} \label{prop}
In the optimization problem (P): $\min_\psi \Delta F$ subject to $E_\psi \leq E_\text{max}$, for any $n$ ($E_{\phi_n} \leq E_\text{max}$), there is $E_{ \psi_n^*}$ (or $\Delta E_n$) monotonically increasing about $E_\text{max}$, where $\psi_n^*$ is the optimal solution to the problem (P).
\end{prop}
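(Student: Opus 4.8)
The plan is to read (P) as a parametric optimization problem with $E_\text{max}$ as the parameter and to extract the monotonicity of the optimizer's energy purely from the nestedness of the feasible sets, so that no detailed knowledge of $\Delta F$ is required. First I would fix the soliton index $n$ and restrict the search to a neighborhood $\mathcal{N}_n$ of the true soliton $\phi_n$, so that $\psi_n^*(E_\text{max}) = \arg\min_{\psi \in \mathcal{N}_n,\, E_\psi \le E_\text{max}} \Delta F(\psi)$ is the branch of local optima that the energy-restricted search actually tracks; this localization is what makes ``the optimal solution to (P)'' well defined in the presence of the several nearby local minima of $\Delta F$ noted in the main text. Writing $\mathcal{S}(E_\text{max}) = \{\psi \in \mathcal{N}_n : E_\psi \le E_\text{max}\}$, the structural fact I would lean on is simply that $E_\text{max}^{(1)} \le E_\text{max}^{(2)}$ implies $\mathcal{S}(E_\text{max}^{(1)}) \subseteq \mathcal{S}(E_\text{max}^{(2)})$.

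The core step is a nested-feasible-set (revealed-preference) argument. Since $\psi_1^* := \psi_n^*(E_\text{max}^{(1)})$ is feasible for the larger problem, the value function $v(E_\text{max}) = \min_{\psi \in \mathcal{S}(E_\text{max})} \Delta F$ is non-increasing, giving $\Delta F(\psi_2^*) \le \Delta F(\psi_1^*)$ with $\psi_2^* := \psi_n^*(E_\text{max}^{(2)})$. Suppose toward a contradiction that the energies are not ordered, i.e. $E_{\psi_2^*} < E_{\psi_1^*} \le E_\text{max}^{(1)}$. Then $\psi_2^*$ already lies in the smaller set $\mathcal{S}(E_\text{max}^{(1)})$, so optimality of $\psi_1^*$ there forces $\Delta F(\psi_1^*) \le \Delta F(\psi_2^*)$. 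The two inequalities collapse to $\Delta F(\psi_1^*) = \Delta F(\psi_2^*)$, and under the generic assumption that the minimizer on each feasible set is unique this forces $\psi_1^* = \psi_2^*$, contradicting the strict energy ordering. Hence $E_{\psi_n^*}$ is monotonically non-decreasing in $E_\text{max}$ (weakly monotone, since it is flat whenever the constraint is inactive), which is the first assertion of the Proposition.

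To convert this into the statement about the energy error $\Delta E_n$ of Eq.~(\ref{DE}), I would invoke the geometry described in the main text: the spurious local minima clustered near the fixed point sit at energies at or above $E_{\phi_n}$, so that along the tracked branch $E_{\psi_n^*} \ge E_{\phi_n}$ and therefore $\Delta E_n = E_{\psi_n^*} - E_{\phi_n}$; monotonicity of $E_{\psi_n^*}$ then transfers directly to $\Delta E_n$. As a complementary check I would read off the same behavior from the KKT conditions of (P): when the energy constraint is active one has $E_{\psi_n^*} = E_\text{max}$ (trivially increasing), and when it is inactive $\psi_n^*$ is an unconstrained local minimizer whose energy is locally insensitive to $E_\text{max}$, the nonnegative shadow price $\lambda \ge 0$ encoding the transition; the feasible-set argument glues these two regimes into a single non-decreasing profile.

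The step I expect to be the main obstacle is making the ``tracked branch'' precise: because $\Delta F$ is non-convex with several nearby minima, I must rule out the selected minimizer jumping discontinuously into a different basin as $E_\text{max}$ grows, and I must secure the uniqueness used in the contradiction step along the branch. I would handle this either by (i) assuming $\Delta F|_{\mathcal{N}_n}$ has a unique minimizer at each feasible energy level (a genericity condition), or (ii) arguing entirely at the level of the value function $v(E_\text{max})$ and its one-sided derivatives, where the sign of $\lambda$ captures the active/inactive dichotomy without requiring pointwise uniqueness. The secondary subtlety is justifying $E_{\psi_n^*} \ge E_{\phi_n}$; if one prefers not to assume it, the Proposition is cleanest as a statement about $E_{\psi_n^*}$, with the $\Delta E_n$ version holding on the energy range where the constraint is the binding mechanism that suppresses the higher-energy spurious minima.
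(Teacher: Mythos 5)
Your proposal is correct and takes essentially the same route as the paper's own proof: the paper likewise exploits the nested feasible regions $s_2 \subset s_1$ and compares the two optima by a case analysis on whether $\psi_{n,1}^*$ lies in $s_2$, which is logically the same as your revealed-preference contradiction argument. Your added care about minimizer uniqueness, branch localization, and the sign condition $E_{\psi_n^*} \ge E_{\phi_n}$ needed to pass from $E_{\psi_n^*}$ to $\Delta E_n$ addresses points that the paper's proof passes over silently.
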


\begin{proof}
For any $n$, take $E_\text{max}^1$ (P1) and $E_\text{max}^2$ (P2), satisfying $E_{\phi_n}\leq E_\text{max}^2 < E_\text{max}^1$. Denote the feasible region of problems (P1) and (P2) being $s_1$ and $s_2$ respectively, then $s_2 \subset s_1$.

For the problem (P1), if $\psi_{n,1}^* \in s_2$, then $\psi_{n,1}^*=\psi_{n,2}^*$. In this case, $E_{\psi_{n,1}^*}=E_{\psi_{n,2}^*}$.
Else if $\psi_{n,1}^* \in s_1\backslash s_2$, and $\psi_{n,2}^* \in s_2$, then $E_{\psi_{n,2}^ *}\leq E_\text{max}^2 < E_{\psi_{n,1}^*}$.

In summary, when $E_{\phi_n}\leq E_\text{max}^2 < E_\text{max}^1$, there is $E_{\psi_{n,2}^*} \leq E_{\psi_{n ,1}^*}$ ($\Delta E_n^2 \leq \Delta E_n^1$). That is, $E_{ \psi_n^*}$ ($\Delta E_n$) monotonically increases with respect to $E_\text{max}$.
\end{proof}


\end{document}